\documentclass[10pt,journal,doublecolumn]{IEEEtran}
\usepackage{amsmath,amssymb,threeparttable,placeins}
\usepackage{lipsum}
\usepackage{bm}
\usepackage{multicol}
\usepackage{array}

\usepackage[mathscr]{euscript}
\usepackage[dvips]{graphicx}
\usepackage[usenames]{color}
\usepackage{amsfonts}
\usepackage{latexsym}
\usepackage{subfigure}
\usepackage[format=hang]{subfig}
\usepackage[justification=centering]{caption}
\usepackage{floatrow}
\usepackage{multirow}
\usepackage{graphicx}
\usepackage{epstopdf}
\usepackage[english]{babel}
\DeclareMathAlphabet{\mathpzc}{OT1}{pzc}{m}{it}

\newtheorem{theorem}{{{{\textit{Theorem}}}}}

\newtheorem{lemma}{{{{\textit{Lemma}}}}}
\newtheorem{corollary}{{{{{\textit{Corollary}}}}}}

\newtheorem{remark}{{{{\textit{Remark}}}}}

\begin{document}
	\title{Generalized Tan-Arlery-Rabaste-Lehmann-Ovarlez Lower Bound on Ambiguity Function of a Set of Sequences With Mismatched Filters}
	\author{Shibsankar Das,~\IEEEmembership{Member,~IEEE}
		\thanks{Shibsankar Das is with the Department of Mathematics, University of Patliputra, Patna, India  (E-mail: shibsankar@ieee.org).
		}
	}
	\maketitle
	\begin{abstract}
		In this paper, a lower bound on the maximum ambiguity function (AF) sidelobes of a set of unimodular sequences is formulated for the desired low-ambiguity-zone (LAZ). Our main idea is to introduce a set of mismatched filters associated to a set of unimodular sequences and two weight vectors for the delay and Doppler shifts, respectively. The length of mismatched filter maybe different to the length of unimodular sequence. The proposed lower bound on the maximum AF sidelobes for the desired LAZ can be treated as an extension of Tan-Arlery-Rabaste-Lehmann-Ovarlez lower bound, published in 2020, which dealt with the conventional correlation of sequences. 
	\end{abstract}
	\begin{IEEEkeywords}
		Frobenius Norm, Mismatched Filter, Delay-Doppler, Ambiguity Function, Low-Ambiguity-Zone.
	\end{IEEEkeywords}
	
	\section{Introduction} 
	\label{sec:intro} 
	\IEEEPARstart{T}{HE} next-generation wireless technologies are expected to feature significant delay and Doppler selectivity \cite{2023WeijieChinaCommunications},  \cite{2026YiViterboStandardsMagazine}. In this scenario, ambiguity function (AF) plays a pivotal role in determining the performance of the system. For example, a near-ideal thumbtack-shaped AF can resolve the Doppler sensitivity within a local AF region in high-mobility integrated sensing and communication (ISAC) systems \cite{2024KechengIEEEJrSelAreSensor}, \cite{2026ChengjuanIEEEWComLett}. Also, low integrated sidelobe level (ISL) of AF can improve the detection performance of orthogonal time-frequency space (OTFS)-ISAC systems \cite{2024IEEETVTKechengFan}, \cite{2026FengLiuTsinghuaScienceTechnology}. Therefore, 
	AF shaping can provide a comprehensive characterization of unimodular sequences used in the delay–Doppler domain. To control the transmission power and to avoid the non-linearity in the system, the transmitted sequences are considered to be unimodular (i.e., each sequence element has unity modulus). It is thus essential to design a set of unimodular sequences with low auto-and cross-AF sidelobe level for high-mobility ISAC systems \cite{2024KechengIEEEJrSelAreSensor}-\nocite{2026ChengjuanIEEEWComLett}\nocite{2024IEEETVTKechengFan}\cite{2026FengLiuTsinghuaScienceTechnology} and radar systems \cite{2024ChenIEEETSP},  \cite{2026XiangqingHuIEEETAES}. The AF includes the classical correlation function as a special case when the Doppler shift is zero. There have been many correlation lower bounds in the literature \cite{1974Welch}\nocite{1999LevenshteinIEEETIT}\nocite{2014LiuIEEETIT}-\cite{2019ArleryIEEETIT}. Most recently, a lower bound for the maximum correlation sidelobes has been investigated using mismatched filter in  \cite{2020Tan-Arlery-Rabaste-Lehmann-OvarlezTIT}.
	
	The maximum AF magnitude directly influences the detection capabilities of unimodular sequences for the multiple moving targets under the strong clutter \cite{2022FulaiIEEETGRS}. It is expected to have ideal AF shape on the entire global AF region \cite{2019YangIEEETAES}. That is, AF has zero sidelobes on the entire delay-Doppler plane except for the origin. However, there does not exist any unimodular sequence with zero AF sidelobes on entire AF region due to the lower bound on AF volume \cite{2020NajafabadiIEEETAES}. To overcome this limitation, one feasible scheme is to design a unimodular receiving filter associated to the transmitted unimodular sequence such that the desired AF shape is achieved within a local AF region \cite{2024KechengIEEEJrSelAreSensor}-\nocite{2026ChengjuanIEEEWComLett}\nocite{2024IEEETVTKechengFan}\nocite{2026FengLiuTsinghuaScienceTechnology}\nocite{2024ChenIEEETSP}\cite{2026XiangqingHuIEEETAES}. The local AF region with low AF sidelobes is called a low ambiguity zone (LAZ) in delay-Doppler domain \cite{2022ZhifanJSAComLiu}. Compared to the receiving matched filters, the low AF sidelobes within the desired LAZ can be achieved by designing a set of receiving mismatched filters associated to the transmitting set of unimodular sequences with some loss in processing gain (LPG) \cite{2022FulaiIEEETGRS}. In recent years, to obtain the desired LAZ, mismatched filter design associated to a unimodular sequence has gained tremendous research interest  \cite{2024KechengIEEEJrSelAreSensor}-\nocite{2026ChengjuanIEEEWComLett}\nocite{2024IEEETVTKechengFan}\nocite{2026FengLiuTsinghuaScienceTechnology}\nocite{2024ChenIEEETSP}\cite{2026XiangqingHuIEEETAES}. Thus, in this paper, it is not a major concern to design both the sets of unimodular sequences and mismatched filters. In addition, there have been extensive research works to design unimodular sequences with LAZ in the literature \cite{2025LiyingIEEETIT}\nocite{2026BingshengIEEETCOM}\nocite{2026MeiyueIEEETSP}\nocite{2026XiupingIEEESPL}-\cite{2026ZhengIEEECOML} and some works also investigate theoretical bounds on the maximum AF magnitude with LAZ \cite{2022ZhifanJSAComLiu}, \cite{2025BingshengIEEETIT}\nocite{2025LingshengIEEETIT}-\cite{2026GangsanIEEETIT}. Driven by the above background, it remains largely open: What is the lower bound on the maximum AF magnitude that can be achieved using a mismatched filter associated to a local AF region? Therefore, it is interesting to investigate a lower bound on the maximum AF magnitude using a mismatched filter associated to the LAZ. 
	
	In this paper, inspired by \cite{2020Tan-Arlery-Rabaste-Lehmann-OvarlezTIT} and \cite{2025LingshengIEEETIT}, we present a lower bound of the maximum AF magnitude (periodic or aperiodic AF) associated to the LAZ using mismatched filter. Specifically, we consider a set  $\{\textit{\textbf{x}}^m\}_{m=1}^{M}$ of $M$ unimodular sequences of length $L_x$ and a set $\{\textit{\textbf{y}}^n\}_{n=1}^{M}$ of $M$ associated mismatched filters of length $L_y$. The AF between the sequence $\textit{\textbf{x}}^m$ and associated filter $\textit{\textbf{y}}^n$ can be referred to auto-AF (AAF) when $m=n$ and the AF between the sequence $\textit{\textbf{x}}^m$ and filter $\textit{\textbf{y}}^n$ can be referred to cross-AF (CAF) when $m\neq n$. A lower bound on the maximum AF sidelobe levels of these AF values is derived for the desired LAZ by considering two weight vectors for the delay and Doppler shifts. We show that the proposed lower bound includes the existing lower bounds from \cite{1974Welch} and \cite{2020Tan-Arlery-Rabaste-Lehmann-OvarlezTIT} as the special cases in zero-Doppler cut.
	
	The rest of the paper is organized as follows. In Section {\ref{Sec:Preli}}, we present some definitions and preliminary ideas. Then, we formulate a lower bound on the maximum AF sidelobe levels for the desired LAZ with mismatched filters in Section \ref{Sec:Proposed:Lower:Bound}. In Section \ref{Sec:Rela:Existing:Litr}, we compare the new bound with some previous bounds. Finally, we conclude our new results in Section \ref{Sec:conclusion}. 
	
	\section{Background and Definitions}
	\label{Sec:Preli}
	\subsection{Periodic and Aperiodic Cross-Ambiguity Function}
	Let $\{\textit{\textbf{x}}^m\}_{m=1}^{M}$ be a set of $M$ unimodular sequences of length $L$. Each sequence of $\{\textit{\textbf{x}}^m\}_{m=1}^{M}$ with 	$\textit{\textbf{x}}^m=(x_0^m,x_1^m, \cdots, x_{L-1}^m)$ and $|x_l^m|=\frac{1}{L}$ for $l=0,1,\cdots, L-1$ and $m=1,2,\cdots,M$. Thus, the energy of the $m$-th unimodular sequence $\textit{\textbf{x}}^m$ is given by $\sum_{l=0}^{L-1}|x_l^m|^2=1$ for each $m\in \{1,2,\cdots, M\}$. The ambiguity function (AF) between two sequences $\textit{\textbf{x}}^m$ and $\textit{\textbf{x}}^n$ with $m,n \in \{1,2,\cdots, M\}$ at delay shift $\tau$ and Doppler shift $\nu$ is defined by
	\begin{align}
		{\bm \phi}^{m,n}(\tau,\nu)=\sum_{l=0}^{L-1}x_{l-\tau}^{m}\left(x_l^{n}\right)^*e^{\frac{j2\pi \nu l }{L}},
	\end{align}where $0\leq |\tau|\leq L-1$ and $0\leq |\nu|\leq L-1$. When $m=n$, ${\bm \phi}^{m,n}(\tau,\nu)$ is called auto-ambiguity function (AAF), denoted by ${\bm \phi}^{m}(\tau,\nu)$; otherwise, it is called cross-ambiguity function (CAF). Note that ${\bm \phi}^{m,n}(\tau,\nu)$ becomes the classical correlation function when the Doppler shift $\nu=0$, denoted by ${\bm \theta}^{m,n}(\tau)$ \cite{2010HaoIEEESPL}, \cite{2017Shibsankar} at the delay shift $\tau$. The AF can be defined into two types, periodic and aperiodic, based on the properties of each unimodular sequence $\textit{\textbf{x}}^m$ in the set $\{\textit{\textbf{x}}^m\}_{m=1}^{M}$. More specifically, for any $l <0$ or $ l \geq L$: $x_l^m=\begin{cases}
			x_{(l)_L}^m \quad \text{in the periodic AF},\\
			0 \quad \quad \  \text{in the aperiodic AF},
		\end{cases}$ where $(l)_L$ is calculated modulo $L$. 
	
	\subsection{The Maximum AAF Magnitude and Maximum CAF Magnitude Associated with the LAZ}
	For $1\leq Z_x \leq L$ and $1\leq Z_y \leq L$, we define the low ambiguity zone (LAZ) ${\bm \prod}$ in the delay-Doppler domain as ${\bm \prod}=\Big\{ (\tau, \nu): -Z_x<\tau <Z_x \ \text{and} \ -Z_y<\nu <Z_y \Big\}$. In practical scenarios, $Z_x$ and $Z_y$ are determined by the maximum delay shift and the maximum Doppler frequency shift. The maximum AAF magnitude ${\bm \phi}_{\max}^\text{AAF}$ and maximum CAF magnitude ${\bm \phi}_{\max}^\text{CAF}$ associated with the LAZ ${\bm \prod}$ are defined by  
	\begin{align}
		\left({\bm \phi}_{\max}^\text{AAF}\right)^2= \max_{(\tau,\nu)\neq (0,0) \atop |\tau|<Z_x, |\nu|<Z_y}\Big\{ {\big|{\bm \phi}^{m}(\tau,\nu)\big|^2} \Big\}, \\
		\left({\bm \phi}_{\max}^\text{CAF}\right)^2= \max_{|\tau|<Z_x, |\nu|<Z_y \atop m\neq m'}\Big\{ {\big|{\bm \phi}^{m,m'}(\tau,\nu)\big|^2} \Big\}. 
	\end{align}Therefore, the maximum AF magnitude ${\bm \phi}_{\max}$ associated to the LAZ ${\bm \prod}$ is defined by ${\bm \phi}_{\max}^2= \max \Big\{ \left({\bm \phi}_{\max}^\text{AAF}\right)^2, \left({\bm \phi}_{\max}^\text{CAF}\right)^2 \Big\}$.
	
	\section{The Generalized Mismatched Filter Bound}
	\label{Sec:Proposed:Lower:Bound}

	Let $\{\textit{\textbf{x}}^m\}_{m=1}^{M}$ be a set of $M$ unimodular sequences of length $L_x$. Let $\{\textit{\textbf{y}}^m\}_{m=1}^{M}$ be a set of $M$ associated mismatched filters of length $L_y$. Let us assume that the length of each mismatched filter is $L_y=L_x+2L_s$, where $L_s \in \mathbb{N}$. For practical applications, the inner product between a unimodular sequence $\textit{\textbf{x}}^m$ and its associated mismatched filter $\textit{\textbf{y}}^m$ in delay-Doppler domain should be maximized and with another mismatched filter $\textit{\textbf{y}}^n$ should be minimized for each $1\leq m,n \leq M$  with $m\neq n$.
	 
	\subsection{The AF Between a Set of Unimodular Sequences With Its Associated Mismatched Filter}
	
	The periodic and aperiodic CAF between the unimodular sequence $\textit{\textbf{x}}^m$ of length $L_x$ and its associated mismatched filter $\textit{\textbf{y}}^n$ of length $L_y$ is defined by 
	\begin{align}
		\label{AF:Mismatched:filter}
		{\bm \phi}^{m,n}(\tau,\nu)=\sum_{l=0}^{L_y-1}x_{l-\tau-L_s}^{m}\left(y_l^{n}\right)^*e^{\frac{j2\pi \nu l }{L_y}},
	\end{align}where $0\leq |\tau|< L_x+L_s$ and $0\leq |\nu|\leq L_y-1$. When $\nu=0$, it is the correlation function between $\textit{\textbf{x}}^m$ and $\textit{\textbf{y}}^n$, denoted by ${\bm \theta}^{m,n}(\tau)$. We now show that phase shifting of each mismatched filter $\textit{\textbf{y}}^m$ does not effect any AF sidelobe level for each Doppler frequency shift $\nu$ and delay shift $\tau=0$.
	
	\begin{lemma}[Phase Shifting]
		When the delay shift $\tau=0$ and Doppler frequency shift $\nu$ with $0\leq |\nu|< L_y$, the AF ${\bm \phi}_1^{m,m}(0,\nu)$ between $\textit{\textbf{x}}^m$ and  mismatched filter $\textit{\textbf{y}}_1^m$ can be expressed by
		 ${\bm \phi}_1^{m,m}(0,\nu)=\big|{\bm \phi}^{m,m}(0,\nu)\big|$, where $\textit{\textbf{y}}_1^m=\textit{\textbf{y}}^me^{j \psi_m}$ for some $\psi_m \in [-\pi, \pi]$  and ${\bm \phi}^{m,m}(0,\nu)$ is the AF between $\textit{\textbf{x}}^m$ and $\textit{\textbf{y}}^m$.
	\end{lemma}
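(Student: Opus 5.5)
The plan is a direct computation from the definition \eqref{AF:Mismatched:filter}, using only that the AF is conjugate-linear in the filter. Substituting $\textit{\textbf{y}}_1^m=\textit{\textbf{y}}^m e^{j\psi_m}$ gives $\left(y_{1,l}^m\right)^*=\left(y_l^m\right)^* e^{-j\psi_m}$ for every $l$. The constant factor comes out of the sum, so
\begin{align*}
{\bm \phi}_1^{m,m}(\tau,\nu)=e^{-j\psi_m}\,{\bm \phi}^{m,m}(\tau,\nu)
\end{align*}
for all admissible $(\tau,\nu)$, in both the periodic and the aperiodic case. Two consequences follow. First, $\big|{\bm \phi}_1^{m,m}(\tau,\nu)\big|=\big|{\bm \phi}^{m,m}(\tau,\nu)\big|$ everywhere, so no AF sidelobe level (and hence neither ${\bm \phi}_{\max}^\text{AAF}$ nor ${\bm \phi}_{\max}^\text{CAF}$) changes under phase shifting. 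Second, the value at $\tau=0$ can be rotated onto the nonnegative real axis.

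Next I specialize to $\tau=0$ and write the polar form ${\bm \phi}^{m,m}(0,\nu)=\big|{\bm \phi}^{m,m}(0,\nu)\big|e^{j\alpha_m(\nu)}$ with $\alpha_m(\nu)\in[-\pi,\pi]$. Choosing $\psi_m=\alpha_m(\nu)$ and inserting it into the identity above gives ${\bm \phi}_1^{m,m}(0,\nu)=\big|{\bm \phi}^{m,m}(0,\nu)\big|$, which is the claim. If ${\bm \phi}^{m,m}(0,\nu)=0$, any $\psi_m$ works. The choice $\psi_m\in[-\pi,\pi]$ is always possible because the argument of a complex number can be taken in that interval.

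The main obstacle is interpretive rather than computational. The phase $\psi_m$ depends on $\nu$, and a single rotation cannot in general make ${\bm \phi}^{m,m}(0,\nu)$ real and nonnegative for every $\nu$ at once. So I will read the lemma as: for each fixed Doppler shift $\nu$ (in particular the mainlobe $\nu=0$, which is the case used later for normalizing the processing gain $\sum_l x^m_{l-L_s}(y^m_l)^*$), there exists $\psi_m$ with the stated property. I will state explicitly that the modulus identity, and hence the invariance of all sidelobe levels, holds for every $(\tau,\nu)$ independently of the choice of $\psi_m$. That invariance is what justifies assuming, without loss of generality, that the mainlobe ${\bm \phi}^{m,m}(0,0)$ is real and positive in the bound derived later.
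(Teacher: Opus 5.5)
Your proposal is correct and matches the paper's proof: pull $e^{-j\psi_m}$ out of the sum, using that $\left(y_{1,l}^m\right)^*=\left(y_l^m\right)^*e^{-j\psi_m}$, and take $\psi_m$ to be the argument of ${\bm \phi}^{m,m}(0,\nu)$. You also make explicit that $\psi_m$ depends on $\nu$, so the claim holds for each fixed Doppler shift, and that $|{\bm \phi}_1^{m,n}(\tau,\nu)|=|{\bm \phi}^{m,n}(\tau,\nu)|$ for all $(\tau,\nu)$ and all $m,n$ whatever $\psi_m$ is; the paper leaves both points only implicit in its Remark.
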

	
	\begin{IEEEproof}
		From the definition (\ref{AF:Mismatched:filter}), for each $\nu$, the AF ${\bm \phi}^{m,m}(0,\nu)$ between $\textit{\textbf{x}}^m$ and $\textit{\textbf{y}}^m$ is given by
		\begin{align}
			\label{t=0:v:AF}
			{\bm \phi}^{m,m}(0,\nu)&=\sum_{l=0}^{L_y-1}x^m_{l-0-L_s}(y_l^m)^* e^{\frac{j2\pi \nu l }{L_y}} \nonumber \\
			&=\big|{\bm \phi}^{m,m}(0,\nu)\big|e^{j\psi_m}, 
		\end{align}where $\psi_m$ is the phase of ${\bm \phi}^{m,m}(0,\nu)$ for the Doppler frequency  $\nu$. We define a new phase shifted mismatched filter $\textit{\textbf{y}}_1^m$ as follows $ \textit{\textbf{y}}_1^m=\textit{\textbf{y}}^me^{j\psi_m}$ for some $\psi_m \in [-\pi, \pi]$. We now calculate the AF between $\textit{\textbf{x}}^m$ and $\textit{\textbf{y}}_1^m$. Using (\ref{t=0:v:AF}), we have 
		\begin{align}
			\label{AF:t:0:Doppler:shifted:MMF}
		&{\bm \phi}_1^{m,m}(0,\nu)
		= \left(\sum_{l=0}^{L_y-1}x^m_{l-L_s}(y_{l}^m)^*  e^{\frac{j2\pi \nu l }{L_y}}\right)e^{-j\psi_m} \nonumber \\
		&= \left(\big|{\bm \phi}^{m,m}(0,\nu)\big|e^{j\psi_m }\right)e^{-j\psi_m} =\big|{\bm \phi}^{m,m}(0,\nu)\big|.
		\end{align}This completes the proof.
	\end{IEEEproof}
	\begin{remark}
	According to (\ref{AF:t:0:Doppler:shifted:MMF}), we can say that $\textit{\textbf{y}}^m$ and $\textit{\textbf{y}}_1^m$ have same AF sidelobes when $\psi_m$'s are different for each Doppler shift $\nu$ with $0\leq |\nu|\leq L_y-1$.
	\end{remark}
	The maximum AF magnitude ${\bm \phi}_{{\max}_0}^{m,m}$ between  $\textit{\textbf{x}}^m$ and $\textit{\textbf{y}}^m$ associated with the LAZ ${\bm \prod}$ when the delay shift $\tau=0$ is defined by
	$\Big|{\bm \phi}_{{\max}_0}^{m,m}\Big|^2=\max_{0\leq |\nu|\leq Z_y-1}\left\{\big|{\bm \phi}^{m,m}(0,\nu)\big|^2\right\}$.
	
\subsection{The Generalized Mismatched Filter Bound Associated with the LAZ}
We first consider the LAZ ${\bm \prod}$ with $1\leq Z_x\leq L_x$ and $1\leq Z_y \leq L_x\leq L_y$. For the unimodular sequence set $\{\textit{\textbf{x}}^m\}_{m=1}^{M}$, we define $Z_y$ Doppler shifted unimodular sequences for each $m$ as $
		\widehat{\textit{\textbf{x}}}^{m,u}=\left(x^m_0,x^m_1e^{\frac{j2\pi u}{L_y}}, \cdots, x^m_{L_x-1}e^{\frac{j2\pi(L_x-1)u}{L_y}}\right)$, where $m=1,2,\cdots, M$ and $u=0,1,\cdots, Z_y-1$. We define sequence $\textit{\textbf{x}}^{m,u}$ of length $L_x+L_y-1$ as 
	\begin{align}
		\label{Doppler:Unimodular:Seq}
		\textit{\textbf{x}}^{m,u}=\left(\widehat{\textit{\textbf{x}}}^{m,u}\cdot e^{\frac{j2\pi L_su}{L_y}}, \textit{\textbf{0}}_{1\times (L_y-1)} \right),
	\end{align}where $\textit{\textbf{0}}_{1\times (L_y-1)}$ is the zero vector of length $L_y-1$. The circulant matrix $\textbf{circ}\left(\textit{\textbf{x}}^{m,u}\right)$ of each  sequence $\textit{\textbf{x}}^{m,u}$ has size $(L_x+L_y-1)\times (L_x+L_y-1)$. For the mismatched filter set $\{\textit{\textbf{y}}^n\}_{n=1}^{M}$, we define $Z_y$ Doppler shifted mismatched filters of length $L_x+L_y-1$ for each $n$ as follows:
\begin{align}
	\label{Doppler:MMF:Filter}
	&\textit{\textbf{y}}^{n,v}=\left(y^n_{L_s}e^{\frac{j2\pi L_sv}{L_y}},y^n_{L_s+1}e^{\frac{j2\pi (L_s+1)v}{L_y}}, \cdots, y^n_{L_y-1}e^{\frac{j2\pi (L_y-1)v}{L_y}},\right. \nonumber \\
	& \quad \left. 0, 0, \cdots , 0, y^n_{0}, y^n_{1}e^{\frac{j2\pi v}{L_y}}, \cdots, y^n_{L_s-1}e^{\frac{j2\pi (L_s-1)v}{L_y}}\right),
\end{align}where $n=1,2,\cdots, M$ and $v=0,1,\cdots, Z_y-1$. The circulant matrix $\textbf{circ}\left(\textit{\textbf{y}}^{n,v}\right)$ of each  sequence $\textit{\textbf{y}}^{n,v}$ has size $(L_x+L_y-1)\times (L_x+L_y-1)$. According to (\ref{Doppler:Unimodular:Seq}) and (\ref{Doppler:MMF:Filter}), we can express AF values as follows:

{\small 
\begin{align}
	&\Big(\textbf{circ}\left(\textit{\textbf{x}}^{m,u}\right)\Big) \cdot \Big(\textbf{circ}\left(\textit{\textbf{y}}^{n,v}\right) \Big)^H \nonumber \\
	&= \textbf{circ}\left(\Big[ {\bm \phi}^{m,n}(0,u-v), {\bm \phi}^{m,n}(-1,u-v)e^{\frac{j2\pi u}{L_y}}, \cdots, \right. \nonumber \\
	&  {\bm \phi}^{m,n}(-L_{\tau},u-v)e^{\frac{j2\pi L_{\tau}u}{L_y}},{\bm \phi}^{m,n}(L_{\tau},u-v)e^{\frac{-j2\pi L_{\tau}u}{L_y}},  \nonumber \\
	& \qquad \qquad \qquad \qquad \left. \cdots, {\bm \phi}^{m,n}(1,u-v)e^{\frac{-j2\pi u}{L_y}} \Big] \right), 
\end{align}}where $L_{\tau}=L_x+L_y-1$. We consider two weight vectors ${\bm \alpha}=[\alpha_0,\alpha_1,\cdots, \alpha_{Z_x-1}]^T$ and ${\bm \beta}=[\beta_0,\beta_1,\cdots, \beta_{Z_y-1}]^T$ for delay and Doppler shifts, respectively, such that 
\begin{align}
	\label{Weight:Condition:alpha:delay}
	\sum_{p=0}^{Z_x-1}\alpha_p=1, \ \ \alpha_p\geq 0, \ p=0,1,\cdots, Z_x-1, \\
	\label{Weight:Condition:beta:Doppler}
	\sum_{u=0}^{Z_y-1}\beta_u=1, \ \ \beta_u\geq 0, \ u=0,1,\cdots, Z_y-1.
\end{align}We define a matrix $\textbf{X}=[\textbf{X}^1,\textbf{X}^2,\cdots, \textbf{X}^M]^T$ of size $MZ_xZ_y \times (2(L_x+L_s)-1)$ from $\textbf{circ}\left(\textit{\textbf{x}}^{m,u}\right)$, where each matrix $\textbf{X}^m$ is given by
$\textbf{X}^m=\begin{bmatrix}
\textbf{x}(\textit{\textbf{x}}^{m,0}) ,
\textbf{x}(\textit{\textbf{x}}^{m,1}) ,
\cdots ,
\textbf{x}(\textit{\textbf{x}}^{m,Z_y-1}) 
\end{bmatrix}^T$, where each matrix $\textbf{x}(\textit{\textbf{x}}^{m,u})$ of size $Z_x\times (2(L_x+L_s)-1)$ is a weighted matrix constructed by taking the first $Z_x$ rows of $\textbf{circ}\left(\textit{\textbf{x}}^{m,u}\right)$, defined by
\begin{align}
	\textbf{x}(\textit{\textbf{x}}^{m,u})=\begin{bmatrix}
		\sqrt{\alpha_0}\sqrt{\beta_u}\textbf{circ}\left(\textit{\textbf{x}}^{m,u}\right)_0 \\
		\sqrt{\alpha_1}\sqrt{\beta_u}\textbf{circ}\left(\textit{\textbf{x}}^{m,u}\right)_1 \\
		\vdots \\
		\sqrt{\alpha_{Z_x-1}}\sqrt{\beta_u}\textbf{circ}\left(\textit{\textbf{x}}^{m,u}\right)_{Z_x-1}
	\end{bmatrix},
\end{align}where $\textbf{circ}\left(\textit{\textbf{x}}^{m,u}\right)_p$ represents the $p$-th row of the circulant matrix $\textbf{circ}\left(\textit{\textbf{x}}^{m,u}\right)$ for $p=0,1,\cdots, Z_x-1$, $u=0,1,\cdots, Z_y-1$, and $m=1,2,\cdots, M$. Similarly, we define a matrix $\textbf{Y}=[\textbf{Y}^1,\textbf{Y}^2,\cdots, \textbf{Y}^M]^T$ of size $MZ_xZ_y \times (2(L_x+L_s)-1)$ from $\textbf{circ}\left(\textit{\textbf{y}}^{n,v}\right)$ following the above process.

From the above two matrices $\textbf{X}$ and $\textbf{Y}$, we derive an upper bound and a lower bound on the Frobenius norm of $\textbf{X}\textbf{Y}^H$, denoted by $||\textbf{X}\textbf{Y}^H||_F$.

\begin{lemma}[Upper Bound of $||\textbf{X}\textbf{Y}^H||_F$]
	\label{Lemma:Upper:Bound:XYH}
	An upper bound of $||\textbf{X}\textbf{Y}^H||_F$ associated to the LAZ ${\bm \prod}$ with $1\leq Z_x\leq L_x$ and $1\leq Z_y \leq L_y$ is given by
	{\small 
	\begin{align}
		||\textbf{X}\textbf{Y}^H||_F^2\leq \sum_{m=1}^{M}\sum_{p=0}^{Z_x-1}\sum_{u=0}^{Z_y-1}\Big(\Big|{\bm \phi}_{{\max}_0}^{m,m}\Big|^2 -{\bm \phi}_{{\max}}^2 \Big) \alpha_p^2\beta_u^2+M^2{\bm \phi}_{{\max}}^2.
	\end{align}}
\end{lemma}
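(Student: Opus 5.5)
We bound $||\textbf{X}\textbf{Y}^H||_F^2$ by writing it as a sum of squared moduli of the entries of $\textbf{X}\textbf{Y}^H$, identifying every entry with a weighted AF value, and then splitting the entries into ``diagonal'' ones and all the others.

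\emph{Step 1: identify the entries.} Index rows of $\textbf{X}$ by $(m,u,p)$ and rows of $\textbf{Y}$ by $(n,v,q)$. The row $(m,u,p)$ of $\textbf{X}$ is $\sqrt{\alpha_p\beta_u}$ times the $p$-th row of $\textbf{circ}(\textit{\textbf{x}}^{m,u})$, and similarly for $\textbf{Y}$. The product $\textbf{circ}(\textit{\textbf{x}}^{m,u})\,\textbf{circ}(\textit{\textbf{y}}^{n,v})^H$ is the circulant whose first row was displayed just before the lemma. Hence the entry $(p,q)$ of that product is a unimodular phase factor times ${\bm \phi}^{m,n}(q-p,u-v)$, or times ${\bm \phi}^{m,n}(p-q,u-v)$ depending on the sign convention. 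The zero padding of length $L_y-1$ ensures that no wrap-around occurs. Therefore
\begin{align*}
||\textbf{X}\textbf{Y}^H||_F^2=\sum_{m,n}\sum_{u,v}\sum_{p,q}\alpha_p\alpha_q\beta_u\beta_v\,\big|{\bm \phi}^{m,n}(p-q,u-v)\big|^2 .
\end{align*}
Because $|p-q|<Z_x$ and $|u-v|<Z_y$, every argument lies inside the LAZ ${\bm \prod}$.

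\emph{Step 2: split the sum.} The diagonal terms are those with $m=n$, $p=q$ and $u=v$. Each contributes $\alpha_p^2\beta_u^2|{\bm \phi}^{m,m}(0,0)|^2$, and we bound $|{\bm \phi}^{m,m}(0,0)|^2\le|{\bm \phi}_{\max_0}^{m,m}|^2$, since $\nu=0$ is included in the maximum defining ${\bm \phi}_{\max_0}^{m,m}$.

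Every other term has $m\neq n$ (a CAF value) or $(\tau,\nu)\neq(0,0)$ (an AAF sidelobe), both inside ${\bm \prod}$. So each is at most ${\bm \phi}_{\max}^2$, where ${\bm \phi}_{\max}$ is understood as defined earlier but for the AF (\ref{AF:Mismatched:filter}) with mismatched filters. The off-diagonal contribution is then at most ${\bm \phi}_{\max}^2$ times the total weight of the off-diagonal index set. That total weight equals
\begin{align*}
\sum_{m,n}\Big(\sum_{p,q}\alpha_p\alpha_q\Big)\Big(\sum_{u,v}\beta_u\beta_v\Big)-\sum_{m}\sum_{p}\sum_{u}\alpha_p^2\beta_u^2 = M^2-\sum_{m,p,u}\alpha_p^2\beta_u^2 ,
\end{align*}
using the normalizations (\ref{Weight:Condition:alpha:delay}) and (\ref{Weight:Condition:beta:Doppler}). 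Adding the two pieces gives exactly the claimed inequality.

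\emph{Main obstacle.} The delicate step is Step 1: verifying that the $(p,q)$ entry of the circulant product is, up to a modulus-one factor, exactly an AF value at delay $\pm(p-q)$ and Doppler $u-v$. This requires careful bookkeeping of the rearranged filter in (\ref{Doppler:MMF:Filter}), the offset $L_s$ and the extra phases $e^{j2\pi L_s u/L_y}$, in order to check that the cyclic shift produces the aperiodic AF without aliasing. I would verify this on the first row, using the displayed circulant identity, and then propagate to the other rows by the circulant structure.

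Lemma 1 (Phase Shifting) can be used to make the diagonal terms real and nonnegative, although only their moduli matter here. Since only moduli enter the Frobenius norm, the phase factors then drop out and the rest is the elementary counting above.
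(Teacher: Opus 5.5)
Your proposal is correct and follows the paper's proof. Both expand $||\textbf{X}\textbf{Y}^H||_F^2$ as the weighted sum of $|{\bm \phi}^{m,n}(p-q,u-v)|^2$, bound the terms with $m=n$, $p=q$, $u=v$ by $|{\bm \phi}_{\max_0}^{m,m}|^2$, and bound every other term by ${\bm \phi}_{\max}^2$. The only difference is bookkeeping: the paper splits the remaining terms into seven groups and sums their weights separately via ${\bm \alpha}^T{\bm \alpha}$ and ${\bm \beta}^T{\bm \beta}$, whereas you get the total weight $M^2-\sum_{m,p,u}\alpha_p^2\beta_u^2$ in one step.
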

\begin{IEEEproof}
	We have 
{\small 
	\begin{align}
		&||\textbf{X}\textbf{Y}^H||_F^2
		=\sum_{m,n=1}^{M} \sum_{p,q=0}^{Z_x-1} \sum_{u,v=0}^{Z_y-1}\Big|{\bm \phi}^{m,n}(p-q,u-v)\Big|^2  \alpha_p\alpha_q\beta_u\beta_v \nonumber \\
		&\leq \sum_{m=1}^{M} \sum_{p=0}^{Z_x-1} \sum_{u=0}^{Z_y-1} \Big|{\bm \phi}_{{\max}_0}^{m,m}\Big|^2 \alpha_p^2\beta_u^2+\sum_{m=1}^{M} \sum_{p=0}^{Z_x-1}  \sum_{u,v=0 \atop u\neq v}^{Z_y-1}{\bm \phi}_{{\max}}^2\alpha_p^2\beta_u\beta_v \nonumber \\
		&+\sum_{m,n=1 \atop m\neq n}^{M} \sum_{p=0}^{Z_x-1} \sum_{u=0}^{Z_y-1}{\bm \phi}_{{\max}}^2\alpha_p^2\beta_u^2+ \sum_{m,n=1 \atop m\neq n}^{M} \sum_{p=0}^{Z_x-1}  \sum_{u,v=0 \atop u\neq v}^{Z_y-1}{\bm \phi}_{{\max}}^2\alpha_p^2\beta_u\beta_v \nonumber \\
		&+\sum_{m=1}^{M}  \sum_{p,q=0 \atop p\neq q}^{Z_x-1}\sum_{u=0}^{Z_y-1}{\bm \phi}_{{\max}}^2\alpha_p\alpha_q\beta_u^2 +\sum_{m=1}^{M}  \sum_{p,q=0, \atop p\neq q}^{Z_x-1} \sum_{u,v=0 \atop u\neq v}^{Z_y-1}{\bm \phi}_{{\max}}^2\alpha_p\alpha_q\beta_u\beta_v \nonumber \\
		&+\sum_{m,n=1 \atop m\neq n}^{M}  \sum_{p,q=0, \atop p\neq q}^{Z_x-1} {\bm \phi}_{{\max}}^2\alpha_p\alpha_q{\bm \beta}^T{\bm \beta}+ \sum_{m,n=1 \atop m\neq n}^{M} \sum_{p,q=0 \atop p\neq q}^{Z_x-1} \sum_{u,v=0 \atop u\neq v}^{Z_y-1}{\bm \phi}_{{\max}}^2\alpha_p\alpha_q\beta_u\beta_v \nonumber \\
		&= \sum_{m=1}^{M}\Big|{\bm \phi}_{{\max}_0}^{m,m}\Big|^2{\bm \alpha}^T{\bm \alpha}{\bm \beta}^T{\bm \beta}+M{\bm \phi}_{{\max}}^2{\bm \alpha}^T{\bm \alpha}(1-{\bm \beta}^T{\bm \beta}) \nonumber \\
		&+(M^2-M){\bm \phi}_{{\max}}^2{\bm \alpha}^T{\bm \alpha}{\bm \beta}^T{\bm \beta} +(M^2-M){\bm \phi}_{{\max}}^2{\bm \alpha}^T{\bm \alpha}(1-{\bm \beta}^T{\bm \beta}) \nonumber \\
		&+M{\bm \phi}_{{\max}}^2(1-{\bm \alpha}^T{\bm \alpha}){\bm \beta}^T{\bm \beta} +M{\bm \phi}_{{\max}}^2(1-{\bm \alpha}^T{\bm \alpha})(1-{\bm \beta}^T{\bm \beta}) \nonumber \\
		&+(M^2-M){\bm \phi}_{{\max}}^2(1-{\bm \alpha}^T{\bm \alpha}){\bm \beta}^T{\bm \beta} \nonumber \\
		&+(M^2-M){\bm \phi}_{{\max}}^2(1-{\bm \alpha}^T{\bm \alpha})(1-{\bm \beta}^T{\bm \beta}) \nonumber \\
		&=\sum_{m=1}^{M}\sum_{p=0}^{Z_x-1}\sum_{u=0}^{Z_y-1}\Big(\Big|{\bm \phi}_{{\max}_0}^{m,m}\Big|^2 -{\bm \phi}_{{\max}}^2 \Big) \alpha_p^2\beta_u^2+M^2{\bm \phi}_{{\max}}^2.
\end{align}}This completes the proof.
\end{IEEEproof}
We now derive the lower bound of $||\textbf{X}\textbf{Y}^H||_F$ in the following lemma.
\begin{lemma}[Lower Bound of $||\textbf{X}\textbf{Y}^H||_F$]
	\label{Lemma:Lower:Bound:XYH}
	A lower bound of $||\textbf{X}\textbf{Y}^H||_F$ associated to the LAZ ${\bm \prod}$ with $1\leq Z_x\leq L_x$ and $1\leq Z_y \leq L_y$ is given by
		\begin{align}
			||\textbf{X}\textbf{Y}^H||_F^2\geq \frac{\Big|\sum_{m=1}^{M}{\bm \phi}^{m,m}(0,0)\Big|^2}{l_{Z_x,Z_y}^{L_x,L_y}},
	\end{align}where $l_{Z_x,Z_y}^{L_x,L_y}=\min\{MZ_xZ_y, L_x+L_y-1\}$.
\end{lemma}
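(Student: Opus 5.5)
The plan is the standard trace/rank argument behind Welch-type bounds, applied to the square matrix $\textbf{X}\textbf{Y}^H$ of size $MZ_xZ_y\times MZ_xZ_y$. Two facts drive it. First, the Frobenius norm of any matrix is at least the squared modulus of its trace divided by its rank. Second, the trace of $\textbf{X}\textbf{Y}^H$ is a weighted sum of the mainlobe values ${\bm \phi}^{m,m}(0,0)$ with total weight one.

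\emph{Step 1: compute the trace.} The diagonal entry indexed by $(m,u,p)$ is the inner product of the $p$-th row of $\textbf{circ}(\textit{\textbf{x}}^{m,u})$ with the $p$-th row of $\textbf{circ}(\textit{\textbf{y}}^{m,u})$, multiplied by $\alpha_p\beta_u$. By the circulant identity displayed before the weight conditions (\ref{Weight:Condition:alpha:delay})--(\ref{Weight:Condition:beta:Doppler}), this inner product is the zero-delay, zero-relative-Doppler value ${\bm \phi}^{m,m}(0,u-u)={\bm \phi}^{m,m}(0,0)$. The common Doppler modulation $e^{j2\pi lu/L_y}$ on both factors cancels under conjugation, and the zero padding and wrap-around placement of $y^n_0,\dots,y^n_{L_s-1}$ in (\ref{Doppler:MMF:Filter}) reproduce exactly the alignment $x_{l-L_s}$ against $y_l$. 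Summing and using $\sum_p\alpha_p=\sum_u\beta_u=1$ gives
\begin{align*}
\mathrm{tr}(\textbf{X}\textbf{Y}^H)=\sum_{m=1}^{M}\sum_{p=0}^{Z_x-1}\sum_{u=0}^{Z_y-1}\alpha_p\beta_u{\bm \phi}^{m,m}(0,0)=\sum_{m=1}^{M}{\bm \phi}^{m,m}(0,0).
\end{align*}

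\emph{Step 2: bound the rank.} Let $r=\mathrm{rank}(\textbf{X}\textbf{Y}^H)$. Then $r\le MZ_xZ_y$, the number of rows. Also, both $\textbf{X}$ and $\textbf{Y}$ have rows of length $L_x+L_y-1$ (equal to $2(L_x+L_s)-1$ since $L_y=L_x+2L_s$), so $r\le L_x+L_y-1$. Hence $r\le l_{Z_x,Z_y}^{L_x,L_y}$.

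\emph{Step 3: trace-versus-Frobenius inequality.} Let $\lambda_1,\dots,\lambda_r$ be the nonzero eigenvalues of $\textbf{A}=\textbf{X}\textbf{Y}^H$, counted with algebraic multiplicity. A matrix of rank $r$ has at most $r$ nonzero eigenvalues. By Cauchy--Schwarz,
\begin{align*}
|\mathrm{tr}\,\textbf{A}|^2=\Big|\sum_{i=1}^{r}\lambda_i\Big|^2\le r\sum_{i=1}^{r}|\lambda_i|^2.
\end{align*}
Schur's inequality gives $\sum_i|\lambda_i|^2\le\|\textbf{A}\|_F^2$, since in a unitary triangularization the eigenvalues sit on the diagonal and the off-diagonal part only adds to the Frobenius norm. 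Combining,
\begin{align*}
\|\textbf{A}\|_F^2\ge \frac{|\mathrm{tr}\,\textbf{A}|^2}{r}\ge \frac{|\mathrm{tr}\,\textbf{A}|^2}{l_{Z_x,Z_y}^{L_x,L_y}},
\end{align*}
which together with Step 1 is the claimed bound.

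\emph{Main obstacle.} Step 1 is the only step needing care. One must check that each diagonal entry really equals ${\bm \phi}^{m,m}(0,0)$ and not a phase-rotated version of it: the extra factor $e^{j2\pi L_su/L_y}$ in (\ref{Doppler:Unimodular:Seq}) and the filter phases in (\ref{Doppler:MMF:Filter}) must align. I would verify this by writing the $p$-th row inner product explicitly and tracking the index shift by $L_s$ modulo $L_x+L_y-1$. If a residual unit-modulus phase appeared, it would depend on $(m,u,p)$ and could spoil the clean sum. In that case I would use Lemma 1 (phase shifting) to normalize each filter so that every diagonal entry becomes real and nonnegative, so the moduli add up.
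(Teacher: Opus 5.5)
Your proposal is correct and follows essentially the paper's route: the trace collapses to $\sum_{m=1}^{M}{\bm \phi}^{m,m}(0,0)$ via $\sum_p\alpha_p=\sum_u\beta_u=1$, the rank of $\textbf{X}\textbf{Y}^H$ is at most $\min\{MZ_xZ_y,L_x+L_y-1\}$, and $\|\textbf{A}\|_F^2\ge|\mathrm{tr}\,\textbf{A}|^2/\mathrm{rank}(\textbf{A})$ finishes the argument. The differences are minor: you prove that last inequality (Schur plus Cauchy--Schwarz) where the paper only cites it, and you bound the rank by dimension counting, whereas the paper asserts $\mathrm{rank}(\textbf{X})=\mathrm{rank}(\textbf{Y})=\min\{\cdot\}$. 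The index check you flag as the main obstacle goes through: on the only nonzero entries $k=0,\dots,L_x-1$ of $\textit{\textbf{x}}^{m,u}$, both $\textit{\textbf{x}}^{m,u}$ and $\textit{\textbf{y}}^{m,u}$ carry the same phase $e^{j2\pi(k+L_s)u/L_y}$, which cancels, so each diagonal entry is exactly $\alpha_p\beta_u{\bm \phi}^{m,m}(0,0)$ (with the aperiodic convention implicit in the zero padding) and no Lemma~1 normalization is needed.
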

\begin{IEEEproof}
	A lower bound of $||\textbf{X}\textbf{Y}^H||_F$ can be expressed in terms of the rank and trace of the matrix $\textbf{X}\textbf{Y}^H$ as follows \cite{2021FuChen}:
	\begin{align}
		\label{lower:bound:norm:XYH}
		||\textbf{X}\textbf{Y}^H||_F^2\geq \Big|\text{tr}\Big(\textbf{X}\textbf{Y}^H\Big)\Big|^2/\text{rank}(\textbf{X}\textbf{Y}^H),
	\end{align}where $\text{tr}\Big(\textbf{X}\textbf{Y}^H\Big)$ is the trace of $\textbf{X}\textbf{Y}^H$ and $\text{rank}(\textbf{X}\textbf{Y}^H)$ is the rank of $\textbf{X}\textbf{Y}^H$. Based on the definitions of the matrices $\textbf{X}$ and $\textbf{Y}$, the rank of both the matrices $\textbf{X}$ and $\textbf{Y}$ is given by
	\begin{align}
		\text{rank}(\textbf{X})=\text{rank}(\textbf{Y})=\min\{MZ_xZ_y, L_x+L_y-1\}.
	\end{align}Therefore, the rank of $\textbf{X}\textbf{Y}^H$ is given by
	\begin{align}
		\label{rank:XYH}
		\text{rank}\Big(\textbf{X}\textbf{Y}^H\Big)\leq \min\{MZ_xZ_y, L_x+L_y-1\}.
	\end{align}The trace of $\textbf{X}\textbf{Y}^H$ is calculated as follows:
	\begin{align}
		\label{trace:XYH}
		\text{tr}\Big(\textbf{X}\textbf{Y}^H\Big)&=\sum_{m=1}^{M}\sum_{p=0}^{Z_x-1}\sum_{u=0}^{Z_y-1}{\bm \phi}^{m,m}(0,0)\alpha_p\beta_u \nonumber\\
		&=\sum_{m=1}^{M}{\bm \phi}^{m,m}(0,0).
	\end{align}Based on (\ref{rank:XYH}) and (\ref{trace:XYH}), from (\ref{lower:bound:norm:XYH}), we have 
	\begin{align}
		||\textbf{X}\textbf{Y}^H||_F^2\geq \frac{\left|\sum_{m=1}^{M}{\bm \phi}^{m,m}(0,0)\right|^2}{l_{Z_x,Z_y}^{L_x,L_y}},
	\end{align}where $l_{Z_x,Z_y}^{L_x,L_y}=\min\{MZ_xZ_y, L_x+L_y-1\}$. 
\end{IEEEproof}

According to \textit{Lemma \ref{Lemma:Upper:Bound:XYH}} and \textit{Lemma \ref{Lemma:Lower:Bound:XYH}}, we conclude a lower bound of the maximum AF magnitude ${\bm \phi}_{{\max}}$ associated to the LAZ ${\bm \prod}$ in the following theorem.
\begin{theorem}
	\label{Theorem:new:lower:bound}
	For any two weight vectors ${\bm \alpha }$ and ${\bm \beta}$ satisfying (\ref{Weight:Condition:alpha:delay}) and (\ref{Weight:Condition:beta:Doppler}) for delay and Doppler shifts, respectively, the lower bound of the maximum AF magnitude ${\bm \phi}_{{\max}}$ associated to the LAZ ${\bm \prod}$ with $1\leq Z_x\leq L_x$ and $1\leq Z_y \leq L_y$ for any length-$L_x$ unimodular sequence set and associated length-$L_y$ mismatched filter set is given by
	\begin{align}
		\label{proposed:lower:bound:AF}
		&{\bm \phi}_{{\max}}^2\geq \frac{1}{l_{Z_x,Z_y}^{L_x,L_y}\Big(M^2-M\sum_{p=0}^{Z_x-1}\sum_{u=0}^{Z_y-1} \alpha_p^2\beta_u^2\Big)}\times  \nonumber \\
		& \left(\left|\sum_{m=1}^{M}{\bm \phi}^{m,m}(0,0)\right|^2-l_{Z_x,Z_y}^{L_x,L_y}\sum_{m=1}^{M}\sum_{p=0}^{Z_x-1}\sum_{u=0}^{Z_y-1}\Big|{\bm \phi}_{{\max}_0}^{m,m}\Big|^2  \alpha_p^2\beta_u^2\right),
	\end{align}where $l_{Z_x,Z_y}^{L_x,L_y}=\min\{MZ_xZ_y, L_x+L_y-1\}$ and the maximum AF magnitude ${\bm \phi}_{{\max}_0}^{m,m}$ when the delay shift $\tau=0$ is given by
	\begin{align}
		\Big|{\bm \phi}_{{\max}_0}^{m,m}\Big|^2=\max_{0\leq |\nu|\leq Z_y-1}\left\{\big|{\bm \phi}^{m,m}(0,\nu)\big|^2\right\}.
	\end{align} 
\end{theorem}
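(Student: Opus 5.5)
The plan is to sandwich $||\textbf{X}\textbf{Y}^H||_F^2$ between the two estimates already established, and then solve the resulting inequality for ${\bm \phi}_{\max}^2$. Throughout, write $S=\sum_{m=1}^{M}\sum_{p=0}^{Z_x-1}\sum_{u=0}^{Z_y-1}\alpha_p^2\beta_u^2 = M\,{\bm \alpha}^T{\bm \alpha}\,{\bm \beta}^T{\bm \beta}$ and $l=l_{Z_x,Z_y}^{L_x,L_y}$.

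First, combine Lemma \ref{Lemma:Lower:Bound:XYH} with Lemma \ref{Lemma:Upper:Bound:XYH}. Both hold for the same matrices $\textbf{X}$, $\textbf{Y}$ built from the given weights ${\bm \alpha}$ and ${\bm \beta}$. This gives
\begin{align*}
\frac{\left|\sum_{m=1}^{M}{\bm \phi}^{m,m}(0,0)\right|^2}{l}\leq \sum_{m=1}^{M}\sum_{p=0}^{Z_x-1}\sum_{u=0}^{Z_y-1}\Big|{\bm \phi}_{{\max}_0}^{m,m}\Big|^2\alpha_p^2\beta_u^2 + {\bm \phi}_{\max}^2\left(M^2-S\right).
\end{align*}
Here I have moved the $-{\bm \phi}_{\max}^2$ part of the upper bound into the coefficient of ${\bm \phi}_{\max}^2$.

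Second, check that the coefficient $M^2-S$ is strictly positive, so that we can divide by it without reversing the inequality. By (\ref{Weight:Condition:alpha:delay}) and (\ref{Weight:Condition:beta:Doppler}) we have ${\bm \alpha}^T{\bm \alpha}\leq(\sum_p\alpha_p)^2=1$, and likewise ${\bm \beta}^T{\bm \beta}\leq 1$. Hence $S\leq M$, and therefore $M^2-S\geq M^2-M$. For $M\geq 2$ this is positive. For $M=1$, equality $S=1$ occurs only when both weight vectors are one-hot. In that degenerate case the denominator vanishes, and I would either exclude it or interpret the bound as trivial. I expect this degenerate case, together with the careful sign bookkeeping, to be the only real obstacle.

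Third, subtract the ${\bm \phi}_{{\max}_0}$ term and multiply through by $l>0$. This gives
\begin{align*}
{\bm \phi}_{\max}^2\, l\,(M^2-S)\geq \left|\sum_{m=1}^{M}{\bm \phi}^{m,m}(0,0)\right|^2 - l\sum_{m=1}^{M}\sum_{p=0}^{Z_x-1}\sum_{u=0}^{Z_y-1}\Big|{\bm \phi}_{{\max}_0}^{m,m}\Big|^2\alpha_p^2\beta_u^2 .
\end{align*}
Finally, divide by $l(M^2-S)$. Since $S = M\sum_{p}\sum_{u}\alpha_p^2\beta_u^2$, this is exactly (\ref{proposed:lower:bound:AF}).

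The bound holds for every admissible pair ${\bm \alpha}$, ${\bm \beta}$, because both lemmas do. I would also remark that when the right-hand side is negative the bound is trivially true, so no further case split is needed.
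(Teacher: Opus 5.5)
Your proposal is correct and follows the paper's route exactly: the paper also obtains the theorem by chaining the lower bound of Lemma \ref{Lemma:Lower:Bound:XYH} with the upper bound of Lemma \ref{Lemma:Upper:Bound:XYH} and then solving for ${\bm \phi}_{\max}^2$. Your explicit check that $M^2-M{\bm \alpha}^T{\bm \alpha}{\bm \beta}^T{\bm \beta}\geq M^2-M>0$ for $M\geq 2$, together with your flagging of the degenerate $M=1$ one-hot case where the denominator vanishes, is a small but worthwhile addition that the paper leaves implicit.
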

	\subsection{Optimal Weight Vectors}
	From (\ref{proposed:lower:bound:AF}), we observe that the generalized lower bound is a function of two weight vectors ${\bm \alpha}$ and ${\bm \beta}$. The weight vectors ${\bm \alpha}$ and ${\bm \beta}$ are called optimal weight vectors if it offers the tighter bound. We also observe that the lower ${\bm \alpha}^T{\bm \alpha}=\sum_{p=0}^{Z_x-1}\alpha_p^2$ and ${\bm \beta}^T{\bm \beta}=\sum_{u=0}^{Z_y-1}\beta_u^2$, the larger the bound. Thus, optimizing the weight vectors ${\bm \alpha}$ and ${\bm \beta}$ such that ${\bm \alpha}^T{\bm \alpha}$ and ${\bm \beta}^T{\bm \beta}$ are minimized. By using Cauchy-Schwarz inequality, we have ${\bm \alpha}^T{\bm \alpha}=\sum_{p=0}^{Z_x-1}\alpha_p^2\geq \frac{\left(\sum_{p=0}^{Z_x-1}\alpha_p\right)^2}{Z_x}=\frac{1}{Z_x}$ and ${\bm \beta}^T{\bm \beta}=\sum_{u=0}^{Z_y-1}\beta_u^2\geq \frac{\left(\sum_{u=0}^{Z_y-1}\beta_u\right)^2}{Z_y}=\frac{1}{Z_y}$. The minimum values of ${\bm \alpha}^T{\bm \alpha}$ and ${\bm \beta}^T{\bm \beta}$ are obtained if and only if ${\bm \alpha}=\left[\frac{1}{Z_x}, \frac{1}{Z_x},\cdots, \frac{1}{Z_x}\right]^T$ and ${\bm \beta}=\left[\frac{1}{Z_y}, \frac{1}{Z_y}, \cdots, \frac{1}{Z_y}\right]^T.$
	
	\begin{corollary}[Maximum AF Magnitude Lower Bound for Optimal Weight Vectors ${\bm \alpha}$ and ${\bm \beta}$]
	For the optimal weight vectors ${\bm \alpha }$ and ${\bm \beta}$, the lower bound of the maximum AF magnitude ${\bm \phi}_{{\max}}$ associated to the LAZ ${\bm \prod}$ with $1\leq Z_x\leq L_x$ and $1\leq Z_y \leq L_y$ for any length-$L_x$ unimodular sequence set and associated length-$L_y$ mismatched filter set is given by
		  \begin{align}
		  	\label{lower:bound:with:optimal:weights}
		  	&{\bm \phi}_{{\max}}^2\geq \frac{1}{l_{Z_x,Z_y}^{L_x,L_y}\Big(M^2Z_xZ_y-M\Big)}\times  \nonumber \\
		  	& \left(Z_xZ_y\left|\sum_{m=1}^{M}{\bm \phi}^{m,m}(0,0)\right|^2-l_{Z_x,Z_y}^{L_x,L_y}\sum_{m=1}^{M}\Big|{\bm \phi}_{{\max}_0}^{m,m}\Big|^2  \right).
		  \end{align}
	\end{corollary}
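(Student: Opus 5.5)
The plan is to specialize Theorem \ref{Theorem:new:lower:bound} to the uniform weight vectors singled out just before the statement, namely $\alpha_p=\frac{1}{Z_x}$ for all $p$ and $\beta_u=\frac{1}{Z_y}$ for all $u$. These clearly satisfy (\ref{Weight:Condition:alpha:delay}) and (\ref{Weight:Condition:beta:Doppler}), so Theorem \ref{Theorem:new:lower:bound} applies to them directly. The only work is then algebraic simplification of (\ref{proposed:lower:bound:AF}).

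First I evaluate the double sum of squared weights:
\begin{align*}
\sum_{p=0}^{Z_x-1}\sum_{u=0}^{Z_y-1}\alpha_p^2\beta_u^2={\bm \alpha}^T{\bm \alpha}\,{\bm \beta}^T{\bm \beta}=\frac{1}{Z_x}\cdot\frac{1}{Z_y}=\frac{1}{Z_xZ_y}.
\end{align*}
Substituting this into the denominator factor of (\ref{proposed:lower:bound:AF}) gives $M^2-\frac{M}{Z_xZ_y}=\frac{M^2Z_xZ_y-M}{Z_xZ_y}$. Substituting it into the numerator term gives
\begin{align*}
l_{Z_x,Z_y}^{L_x,L_y}\sum_{m=1}^{M}\Big|{\bm \phi}_{{\max}_0}^{m,m}\Big|^2\frac{1}{Z_xZ_y}.
\end{align*}
Since $\big|{\bm \phi}_{{\max}_0}^{m,m}\big|^2$ does not depend on $p$ or $u$, it factors out of the inner sums.

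Next I multiply the numerator and the denominator of the resulting fraction by $Z_xZ_y>0$. The denominator becomes $l_{Z_x,Z_y}^{L_x,L_y}\big(M^2Z_xZ_y-M\big)$. The bracketed numerator becomes
\begin{align*}
Z_xZ_y\Big|\sum_{m=1}^{M}{\bm \phi}^{m,m}(0,0)\Big|^2-l_{Z_x,Z_y}^{L_x,L_y}\sum_{m=1}^{M}\Big|{\bm \phi}_{{\max}_0}^{m,m}\Big|^2,
\end{align*}
which is exactly (\ref{lower:bound:with:optimal:weights}).

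The one point needing care is the division step. I must check that $M^2Z_xZ_y-M>0$ so the inequality direction is preserved. This holds whenever $MZ_xZ_y>1$. In the degenerate case $M=Z_x=Z_y=1$, the denominator factor $M^2-M\sum\alpha_p^2\beta_u^2$ in Theorem \ref{Theorem:new:lower:bound} is already zero, so the corollary inherits exactly the validity of the theorem there and I would note that case as excluded.

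The naming ``optimal'' relies on the Cauchy--Schwarz argument above. That argument shows ${\bm \alpha}^T{\bm \alpha}$ and ${\bm \beta}^T{\bm \beta}$ are minimized by these weights, which both enlarges the numerator and shrinks the subtracted term in the denominator. This justification is not needed for the validity of the stated inequality itself.
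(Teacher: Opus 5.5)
Your proposal is correct and follows the paper's route exactly: the corollary comes from substituting the uniform weights $\alpha_p=1/Z_x$ and $\beta_u=1/Z_y$ (the Cauchy--Schwarz minimizers of ${\bm \alpha}^T{\bm \alpha}$ and ${\bm \beta}^T{\bm \beta}$) into Theorem~\ref{Theorem:new:lower:bound}, using $\sum_{p,u}\alpha_p^2\beta_u^2=1/(Z_xZ_y)$, and then clearing the factor $Z_xZ_y$. One small correction: multiplying numerator and denominator by $Z_xZ_y>0$ leaves the fraction's value unchanged, so that step needs no sign check; the only nondegeneracy condition is the one already inherited from the theorem, and with uniform weights it fails only when $M=Z_x=Z_y=1$, as you note.
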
Since ${\bm \phi}^{m,m}(0,0)$ and ${\bm \phi}_{{\max}_0}^{m,m}$ can be expressed through the energy of the filter (or Doppler shifted filter), we can set ${\bm \phi}^{m,m}(0,0)={\bm \phi}_{{\max}_0}^{m,m}=1$.
	
	Then, (\ref{lower:bound:with:optimal:weights}) becomes
	\begin{align}
		\label{Optimal:weight:Lower:Bound}
		{\bm \phi}_{{\max}}^2 \geq \frac{MZ_xZ_y-l_{Z_x,Z_y}^{L_x,L_y} }{l_{Z_x,Z_y}^{L_x,L_y}\Big(MZ_xZ_y-1\Big)}.
	\end{align}

	\section{Relationship With Existing Bounds}
	\label{Sec:Rela:Existing:Litr}
	
	\subsection{Relationship With the Well-Known Welch Bound \cite{1974Welch}}
	We show that the proposed lower bound includes Welch bound \cite{1974Welch} as a special case. Let us consider a special case of (\ref{Optimal:weight:Lower:Bound}) defined by $Z_x=2L_x-1$ and $Z_y=1$ with constant weight vectors ${\bm \alpha}=[1/Z_x,1/Z_x,\cdots, 1/Z_x]^T$ and ${\bm \beta}=[1/Z_y]^T$ with $Z_y=1$. Then, the Doppler shift becomes $\nu=0$. We have $l_{Z_x,Z_y}^{L_x,L_y}=2L_x-1$. Then, (\ref{Optimal:weight:Lower:Bound}) becomes  
	\begin{align}
		\label{Welch:Lower:Bound:Special:Case}
		{\bm \phi}_{{\max}}^2 \geq \frac{M-1 }{M(2L_x-1)-1},
	\end{align}which is the well-known Welch bound \cite{1974Welch}.

	\subsection{Relationship With Tan-Arlery-Rabaste-Lehmann-Ovarlez Correlation Lower Bound \cite{2020Tan-Arlery-Rabaste-Lehmann-OvarlezTIT}}
	We also show that the proposed lower bound (\ref{proposed:lower:bound:AF}) includes Tan-Arlery-Rabaste-Lehmann-Ovarlez correlation lower bound \cite{2020Tan-Arlery-Rabaste-Lehmann-OvarlezTIT} as a special case. Let us consider a special case of the proposed lower bound (\ref{proposed:lower:bound:AF}) defined by $Z_x=L_x=K$ and $Z_y=1$. We consider a weight vector ${\bm \alpha}=[\alpha_0,\alpha_1,\cdots, \alpha_{K-1}]^T$ satisfying (\ref{Weight:Condition:alpha:delay}) and ${\bm \beta}=[1/Z_y]^T$ with $Z_y=1$. The Doppler shift becomes $\nu=0$. Then, AF ${\bm \phi}^{m,m}(\tau,0)$ is the correlation function ${\bm \theta}^{m,m}(\tau)$ for each $m=1,2,\cdots M$. Consequently, (\ref{proposed:lower:bound:AF}) becomes 
	\begin{align}
		\label{Tan-Arlery-Rabaste-Lehmann-Ovarlez:bound}
		&{\bm \phi}_{{\max}}^2\geq \frac{1}{\Big(M^2-M{\bm \alpha}^T{\bm \alpha}\Big)}  \nonumber \\
		& \times \left(\frac{\left|\sum_{m=1}^{M}{\bm \phi}^{m,m}(0,0)\right|^2}{l_{K}^{L_x,L_y}}-{\bm \alpha}^T{\bm \alpha}\sum_{m=1}^{M}\Big|{\bm \phi}_{{\max}_0}^{m,m}\Big|^2  \right),
	\end{align}where $l_{K}^{L_x,L_y}=\min\{MK, L_x+L_y-1\}$, which is the Tan-Arlery-Rabaste-Lehmann-Ovarlez lower bound \cite{2020Tan-Arlery-Rabaste-Lehmann-OvarlezTIT}. 
	
	\section{Conclusion}
	\label{Sec:conclusion}
	In this work, we investigated a new lower bound on the maximum AF sidelobes over LAZ  ${\bm \prod}$ in the delay-Doppler domain as demonstrated in \textit{Theorem \ref{Theorem:new:lower:bound}}. The main innovation is to consider a set of mismatched filters associated to a unimodular sequence set and two weight vectors corresponding to the delay and Doppler shifts, respectively. As an extension of the previous Tan-Arlery-Rabaste-Lehmann-Ovarlez lower bound \cite{2020Tan-Arlery-Rabaste-Lehmann-OvarlezTIT}, the proposed AF lower bound associated to the LAZ  ${\bm \prod}$ is a function of the set size $M$, the sequence length $L_x$, the filter length $L_y$, delay shift $Z_x$, Doppler shift $Z_y$, the weight vectors ${\bm \alpha}, {\bm \beta}$ and the AF values ${\bm \phi}_{{\max}_0}^{m,m}$ and ${\bm \phi}^{m,m}(0,0)$ for all $m=0,1,\cdots, M$. Furthermore, we show a relationship between the proposed lower bound and the existing Welch bound \cite{1974Welch}. As a future work, it would be interesting to design a set of unimodular sequences and its associated mismatched filter set meeting the proposed AF lower bound. 
		

\end{document}